\newtheorem{thm}{Theorem}
\newtheorem{prop}{Proposition}
\newtheorem{defn}{Definition}
\newtheorem{corol}{Corollary}
\let\doendproof\endproof
\renewcommand\endproof{~\hfill\qed\doendproof}
\newcommand{\LatL}{\mathcal{L}}
\newcommand{\dl}{\delta}
\newcommand{\ep}{\varepsilon}
\newcommand{\gm}{\gamma}
\newcommand{\Gp}{\Gamma^{(p)}}
\newcommand{\Ginf}{\Gamma^{(\infty)}}
\newcommand{\Ccal}{\mathcal{C}}
\newcommand{\Gcal}{\mathcal{G}}
\newcommand{\N}{\mathbb{N}}
\newcommand{\R}{\mathbb{R}}
\newcommand{\Rmax}{\mathbb{R}_{\max}}
\newcommand{\Rmin}{\mathbb{R}_{\min}}
\newcommand{\matRmax}[1]{\mathcal{M}_{#1}}
\newcommand{\setBrace}[2]{\lbrace #1,\dots, #2\rbrace}
\newcommand{\setBracket}[2]{\llbracket #1, #2\rrbracket}
\newcommand{\mul}{\otimes}
\newcommand{\mbf}[1]{\mathbf{#1}}
\newcommand{\doa}{doubly-$0$-astic}
\newcommand{\Doa}[1]{\mathcal{D}_0(#1)}
\begin{document}
\title{Morphological adjunctions represented by matrices in max-plus
  algebra for signal and image processing}
\titlerunning{Morphological adjunctions represented by matrices}
%
\author{{Samy} Blusseau\inst{1} \and
{Santiago} Velasco-Forero\inst{1} \and
{Jes\'us} Angulo\inst{1} \and
{Isabelle} Bloch\inst{2}}
\authorrunning{S. Blusseau et al.}
%
\institute{Mines Paris, PSL University, Centre for Mathematical Morphology,\\77300 Fontainebleau, France \and
Sorbonne Universit\'e, CNRS, LIP6, F-75005 Paris, France}
%
\maketitle              
\begin{abstract}

  In discrete signal and image processing, many dilations and erosions
  can be written as the max-plus and min-plus product of a matrix on a
  vector. Previous studies considered operators on symmetrical,
  unbounded complete lattices, such as Cartesian powers of the
  completed real line. This paper focuses on adjunctions on closed
  hypercubes, which are the complete lattices used in practice to
  represent digital signals and images. We show that this constrains the
  representing matrices to be doubly-$0$-astic and we characterise the
  adjunctions that can be represented by them. A graph interpretation
  of the defined operators naturally arises from the adjacency
  relationship encoded by the matrices, as well as a max-plus spectral
  interpretation.

  \keywords{Morphological operators \and Max-plus algebra \and Graph
    theory.}
\end{abstract}

\section{Introduction}
\label{sec:intro}

Like linear filters can be represented by matrices in discrete image
and signal processing, many morphological dilations and erosions can
be seen as applying a matrix product to a vector, but in the minimax
algebra. This is in particular the case for those defined with
structuring functions, either flat or not, local or
non-local~\cite{salembier09study,velasco13nonlocal}, translation
invariant or spatially
variant~\cite{bouaynaya2008theoretical,debayle2006general,lerallut2007amoebas,verdu2011anisotropic}. They
are commonly known to be the vertical-shift-invariant dilations and
erosions~\cite{heijmans1990algebraic,maragos13representations}. While
the matrix point of view is not the most appropriate for the
implementation of these operators, especially translation-invariant
ones, it is a valuable insight for their theoretical understanding. In
particular, it can help predict and control complex behaviours such as
those of iterated operators based on adjunctions with non-flat,
spatially variant and input-adapted structuring
functions~\cite{blusseau18tropical,blusseau2022anisotropic}. Indeed,
it is a flexible and general framework which embraces a very broad
part of morphological literature, and it is supported by the rich
theory of Minimax
algebra~\cite{akian06max-plus-algebra,cuninghame79minimax}.

In the abundant literature on spatially-variant morphological image
processing, only a few approaches explicitly used the matrix
formulation~\cite{blusseau18tropical,blusseau2022anisotropic,salembier09study,velasco13nonlocal},
whereas most contributions were limited to flat structuring elements
and focused on the local effects of the adaptive strategy. On the
theoretical side, the representation of morphological adjunctions by
matrices was studied in a setting that does not directly apply to
digital signal and image processing, as the co-domain is usually an unbounded lattice, stable under any vertical
translation~\cite{maragos13representations}. Although a method was proposed to convert these adjunctions to new ones on bounded lattices~\cite{ronse1990why}, it is not practical and does not allow for the interpretations that are exposed here.

In the present paper, we focus on complete lattices of the type
$[a, b]^n$, where $a$ and $b$ represent the minimal and maximal
possible signal values (typically $a = 0$ and $b = 255$ for 8-bits
images), and $n$ is an integer representing the size of the signal
(typically, the number of pixels of an image, reshaped as a column
vector). This is a theoretical contribution that can be viewed as a
companion paper to previous studies where this framework has been
successfully applied to adaptive anisotropic filtering~\cite{blusseau18tropical,blusseau2022anisotropic}\footnote{An online demo for~\cite{blusseau2022anisotropic} is available:
  \url{https://bit.ly/anisop\_demo}.}.
In Section~\ref{sec:morpho_operators} we introduce the matrix-based
morphological setting and prove simple but fundamental results:
in particular, we characterise the adjunctions that can be represented by matrices and show that these matrices need to be doubly-$0$-astic.
By viewing matrices as encoding adjacency, we provide in Section~\ref{sec:graph-tropical} a graph interpretation of iterated operators and their associated
granulometries. In Section~\ref{sec:morpho_spectral} we draw a link
between these operators and some results on the spectrum of matrices
in the max-plus algebra, before concluding in
Section~\ref{sec:conclusion}.

\section{Matrix-based morphological adjunctions}
\label{sec:morpho_operators}

\subsection{Notations}
\label{sec:notations}

In this paper matrices will be denoted by capital letters, such as
$W$, and their $i$-th row and $j$-th column coefficients by
corresponding indexed lowercase letters~$w_{ij}$. Similarly, vectors
are written as boldface lowercase letters, such as $\mbf{x}$, and
their $i$-th component as $x_i$.
Let $0\leq a < b\in\mathbb{R}^+$ be two non-negative real numbers,
$n\in\N^*$ a positive integer.
The set $\setBrace{1}{n}$ will be denoted by $\llbracket 1,n \rrbracket$.
Let $\LatL = ([a,b]^n, \leq)$ be the
complete lattice equipped with the usual product partial ordering
(Pareto ordering):
$\mbf{x}\leq\mbf{y} \iff x_i \leq y_i, \;\;\forall
i\in\setBracket{1}{n}$. The supremum and infimum on $\LatL$ are induced
by the Pareto ordering: for a family $(\mbf{x}^{(k)})_{k\in K}$ of
$\LatL$, $\bigvee_{k\in K} \mbf{x}^{(k)}$ is the vector $\mbf{y}$
defined by $y_i = \bigvee_{k\in K}x_i^{(k)}$, where $K$ is any index
set. Therefore $\mbf{a} = (a, \dots, a)^T$ and
$\mbf{b} = (b, \dots, b)^T$ are respectively the smallest and largest
elements in $\LatL$. For $\mbf{x}\in\LatL$, we note
$\mbf{x}^c \ \dot{=} \ \mbf{b} - \mbf{x} +\mbf{a}$, and for any
$i\in \setBrace{1}{n}$, $\mbf{e^{(i)}}$ is the ``impulse'' vector in
$\LatL$ such that $e^{(i)}_i = b$ and $e^{(i)}_j = a$ for
$j \neq i$.

We note $\Rmax \ \dot{=}\ \mathbb{R}\cup \lbrace-\infty\rbrace$,
$\Rmin \ \dot{=} \ \mathbb{R}\cup \lbrace +\infty\rbrace$ and $\matRmax{n}$
the set of $n\times n $ square matrices with coefficients in $\Rmax$.
Like $(\Rmax, \vee, +)$, $(\matRmax{n}, \vee, \mul)$ is an idempotent semiring, with the addition $\vee$ and product $\mul$ defined as follows.
For $A, B\in\matRmax{n}$, $A \mul B$ and $A\vee B$  are the $n\times n$ matrices defined respectively by
$(A \mul B)_{ij} = \bigvee_{k=1}^n a_{ik}+ b_{kj} $ and $(A\vee B)_{ij} = a_{ij}\vee b_{ij} = \max(a_{ij}, b_{ij})$, for
$1\leq i,j\leq n$. 
Similarly, for $\mbf{x}\in\Rmax^n$, $A\mul\mbf{x}$ is the vector such that $(A\mul\mbf{x})_i = \bigvee_{j=1}^n a_{ij} + x_j $.
Note that
$\vee$ and $\mul$ are associative and $\mul$ is distributive over
$\vee$. Finally, the product of a scalar $\lambda\in\Rmax$ by a vector $\mbf{x}\in\Rmax^n$ is $\lambda\mul\mbf{x} \ \dot{=} \ \lambda + \mbf{x}$, the vector in $\Rmax$ such that
$(\lambda \mul \mbf{x})_i = \lambda +
x_i$. 
In~\cite{cuninghame79minimax} and~\cite{velasco13nonlocal}, special
subsets of $\matRmax{n}$ are introduced, that we will show to be
essential to represent morphological adjunctions on $\LatL$.

\begin{defn}[0-asticity~\cite{cuninghame79minimax}]
  \label{def:asticity}
  A matrix $W\in\matRmax{n}$ is said \textbf{\emph{row-0-astic}}
  if for any $1\leq i \leq n, \ \bigvee_{j=1}^n w_{ij} = 0$. 
  Similarly, it is said \textbf{\emph{column-0-astic}} if the
  supremum of each column is $0$, and \textbf{\emph{doubly-0-astic}}
  if the matrix is both row-0-astic and column-0-astic.
  Finally, $W$ is simply said \textbf{\emph{$0$-astic}} if
  $\ \bigvee_{1\leq i,j \leq n} w_{ij} = 0.$
\end{defn}
A special kind of {\doa} matrices are those with zeros on the diagonal
and non-positive coefficients elsewhere.
\begin{defn}[CMW matrices~\cite{velasco13nonlocal}]
  \label{def:cmw}
  A matrix $W\in\matRmax{n}$ is called a Conservative Morphological
  Weights (CMW) matrix if $\forall i, j \in\setBracket{1}{n}$,
  $w_{ij}\leq 0$ and $w_{ii} = 0$.
\end{defn}

We now introduce the morphological framework on $\LatL$, based on the
max-plus algebra product between matrices and vectors.

\subsection{Dilations}
\label{sec:dilations}

For $W\in\matRmax{n}$, we consider the
function $\dl_W$ from $\LatL$ to $\Rmax^n$ such that
\begin{equation}
  \label{eq:delta_W}
\forall\mbf{x}\in\LatL,\;\;\;\;\dl_W(\mbf{x}) = W\mul\mbf{x} = \left(\bigvee_{1\leq j \leq n} \lbrace w_{ij} +  x_j \rbrace\right)_{1\leq i \leq n} .
\end{equation}
In the processing of digital data such as images we usually want the
input to be comparable with the output. Hence, we will constrain $W$
such that $\dl_W(\LatL) \subseteq \LatL$. This has the following
consequences:
$$
\dl_W(\mbf{b})\leq \mbf{b} \Rightarrow \forall i\in\setBracket{1}{n},
\;\;b + (\bigvee_{j=1}^n w_{ij}) \leq b \Rightarrow \forall i\in\setBracket{1}{n},
\;\; \bigvee_{j=1}^n w_{ij} \leq 0
$$
since $b>-\infty$. Similarly,
$ \dl_W(\mbf{a})\geq \mbf{a} \Rightarrow \forall i\in\setBracket{1}{n},
\;\; \bigvee_{j=1}^n w_{ij} \geq 0.$
Hence a necessary condition to have $\dl_W(\LatL) \subseteq \LatL$ is that
$W$ be row-$0$-astic (Def.~\ref{def:asticity}). Conversely, the
row-$0$-asticity for $W$ implies that $\dl_W(\mbf{a}) =
\mbf{a}$ and $\dl_W(\mbf{b}) = \mbf{b}$, and therefore
that $\dl_W(\LatL) \subseteq \LatL$ by increasingness of $\dl_W$. This leads to the following result.

\begin{prop}
  \label{prop:dilation-roa}
  Let $W\in\matRmax{n}$ and $\dl_W$ be the function defined by
  (\ref{eq:delta_W}). Then $\dl_W$ is a dilation mapping $\LatL$ to
  $\LatL$ if and only if $W$ is row-$0$-astic.
\end{prop}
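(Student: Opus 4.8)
The plan is to split the equivalence into its two implications and to lean on the computation carried out immediately before the statement, which already shows that $\dl_W(\LatL)\subseteq\LatL$ holds precisely when $W$ is row-$0$-astic. The only thing the statement adds to that discussion is the word ``dilation'', so the extra work is confined to the join-preservation property.

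For the forward implication, suppose $\dl_W$ is a dilation from $\LatL$ to $\LatL$. In particular it is a self-map of $\LatL$, i.e.\ $\dl_W(\LatL)\subseteq\LatL$, so evaluating at $\mbf{b}$ and $\mbf{a}$ and reusing the necessary-condition argument above forces $\bigvee_{j=1}^n w_{ij}=0$ for every $i$, which is exactly row-$0$-asticity. Nothing beyond the values on the two extreme vectors is needed here.

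For the converse, assume $W$ is row-$0$-astic. The preceding discussion already gives $\dl_W(\mbf{a})=\mbf{a}$, $\dl_W(\mbf{b})=\mbf{b}$ and, by increasingness, $\dl_W(\LatL)\subseteq\LatL$, so $\dl_W$ is a well-defined self-map of $\LatL$. It remains to check that it commutes with arbitrary suprema. First I would note that since $[a,b]$ is closed, the componentwise join of any family in $\LatL$ coincides with its join in $\Rmax^n$, so the two lattice structures agree on suprema. Then, for a family $(\mbf{x}^{(k)})_{k\in K}$ in $\LatL$ with join $\mbf{y}$ (so $y_j=\bigvee_k x_j^{(k)}$), I would expand
\begin{align*}
\left(\dl_W(\mbf{y})\right)_i
&= \bigvee_{j=1}^n\Big(w_{ij}+\textstyle\bigvee_{k} x_j^{(k)}\Big)
= \bigvee_{j=1}^n\bigvee_{k}\big(w_{ij}+ x_j^{(k)}\big)\\
&= \bigvee_{k}\bigvee_{j=1}^n\big(w_{ij}+ x_j^{(k)}\big)
= \bigvee_{k}\left(\dl_W(\mbf{x}^{(k)})\right)_i ,
\end{align*}
where the second equality uses the distributivity of $+$ over $\vee$ in $\Rmax$ and the third the associativity and commutativity of $\vee$. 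Read componentwise, this is $\dl_W\big(\bigvee_k \mbf{x}^{(k)}\big)=\bigvee_k \dl_W(\mbf{x}^{(k)})$, the dilation property.

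I do not expect a genuine obstacle, since join-preservation is an intrinsic feature of the max-plus matrix--vector product and row-$0$-asticity intervenes only to keep the image inside $\LatL$. The two points that deserve care are the entries $w_{ij}=-\infty$, for which the identity $w_{ij}+\bigvee_k x_j^{(k)}=\bigvee_k(w_{ij}+x_j^{(k)})$ still holds (both sides being $-\infty$), and the identification of joins in $\LatL$ with those in $\Rmax^n$, which is what legitimises applying the $\Rmax$-distributivity while working inside $\LatL$.
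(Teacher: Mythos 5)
Your proof is correct and follows essentially the same route as the paper: the forward direction reuses the necessary-condition computation at $\mbf{a}$ and $\mbf{b}$ given just before the statement, and the converse combines the resulting containment $\dl_W(\LatL)\subseteq\LatL$ with commutation with suprema, which the paper simply declares ``straightforward from the definition'' and you spell out via distributivity of $+$ over $\vee$ in $\Rmax$. Your added remarks about the $-\infty$ entries and the agreement of joins in $\LatL$ and $\Rmax^n$ are correct and only make explicit what the paper leaves implicit.
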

\begin{proof}
  If $\dl_W$ is a dilation mapping $\LatL$ to $\LatL$, then
  $\dl_W(\LatL) \subseteq \LatL$ which, as we showed, implies that $W$
  is row-$0$-astic. Conversely, we saw that a row-$0$-astic $W$
  implies $\dl_W(\LatL) \subseteq \LatL$. Therefore, we only have to
  verify that $\dl_W$ is a dilation, or equivalently that it commutes
  with the supremum. This is straightforward from the definition of
  $W\mul\mbf{x}$.
\end{proof}

\subsection{Erosions and adjunctions}
\label{sec:erosions-adjunctions}

Now we suppose that $W\in\matRmax{n}$ is row-$0$-astic, hence $\dl_W$
is a dilation from $\LatL$ to $\LatL$, and we are interested in its
adjoint erosion $\alpha_W$ defined for any $\mbf{y}\in\LatL$ by
$\alpha_W(\mbf{y}) = \bigvee E_{\mbf{y}}$ where
$ E_{\mbf{y}} = \lbrace \mbf{x}\in\LatL, \dl_W(\mbf{x}) \leq
\mbf{y}\rbrace.$ Let us denote by $\ep_W$ the function from $\LatL$ to
$\Rmin^n$ such that for any $\mbf{y}\in\LatL$
\begin{equation}
  \label{eq:epsilon_W}
  \ep_W(\mbf{y}) 
  =  \big(\dl_{W^T}(\mbf{y}^c)\big)^c = \big(W^T\mul\mbf{y}^c\big)^c = \left(\bigwedge_{1\leq j \leq n} \lbrace y_j - w_{ji}\rbrace\right)_{1\leq i \leq n}.
\end{equation}
Then we can check that
$\forall \mbf{y}\in\LatL,\;\; \alpha_W(\mbf{y}) = \ep_{W}(\mbf{y})
\wedge \mbf{b}$. Indeed, from \eqref{eq:delta_W} we see that for any
$\mbf{x}, \mbf{y}\in \LatL, \ \dl_W(\mbf{x})\leq \mbf{y} \iff
\mbf{x}\leq \ep_W(\mbf{y}).$ Therefore, since
$\delta_W(\mbf{a}) = \mbf{a} \leq \mbf{y}$ we get
$\mbf{a}\leq \ep_{W}(\mbf{y})$, which implies
$\ep_{W}(\mbf{y}) \wedge \mbf{b}\in\LatL$; furthermore,
$\ep_{W}(\mbf{y})\wedge\mbf{b}\leq \ep_{W}(\mbf{y})$ so
$\ep_{W}(\mbf{y})\wedge\mbf{b}\in E_{\mbf{y}}$; finally, as both
$\ep_{W}(\mbf{y})$ and $\mbf{b}$ are upper-bounds of $E_{\mbf{y}}$, so
is $\ep_{W}(\mbf{y})\wedge\mbf{b}$. Hence,
$\ep_{W}(\mbf{y})\wedge\mbf{b} = \bigvee E_{\mbf{y}} =
\alpha_W(\mbf{y})$. By a similar reasoning as in
Section~\ref{sec:dilations}, we get the following result.

\begin{prop}
\label{prop:erosion-coa}
Let $W\in\matRmax{n}$ and $\ep_W$ be the function defined by
(\ref{eq:epsilon_W}). Then $\ep_W$ is an erosion mapping $\LatL$ to
$\LatL$ if and only if $W$ is column-$0$-astic.
\end{prop}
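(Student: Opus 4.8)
The plan is to mirror the argument already carried out for Proposition~\ref{prop:dilation-roa}, exploiting the structural duality between $\dl_W$ and $\ep_W$ that is visible in the definition~\eqref{eq:epsilon_W}. The key observation is that $\ep_W(\mbf{y}) = \big(\dl_{W^T}(\mbf{y}^c)\big)^c$, so $\ep_W$ is obtained from the dilation $\dl_{W^T}$ by conjugation with the complementation map $\mbf{x}\mapsto\mbf{x}^c$. Since complementation $\mbf{x}\mapsto \mbf{b}-\mbf{x}+\mbf{a}$ is an involutive anti-automorphism of $\LatL$ (it reverses the order and is its own inverse), it turns dilations into erosions and suprema into infima. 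I would first make this correspondence explicit, then transfer the row-$0$-asticity criterion through the transpose.

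First I would establish the algebraic equivalence: because $(W^T)_{ij} = w_{ji}$, the matrix $W^T$ is row-$0$-astic if and only if $W$ is column-$0$-astic, directly from Definition~\ref{def:asticity}. Then I would show that $\ep_W$ maps $\LatL$ to $\LatL$ exactly when $\dl_{W^T}$ does. Indeed, complementation maps $\LatL$ bijectively onto itself, so $\ep_W(\LatL)\subseteq\LatL \iff \dl_{W^T}(\mbf{y}^c)^c\in\LatL \text{ for all } \mbf{y} \iff \dl_{W^T}(\LatL)\subseteq\LatL$. By Proposition~\ref{prop:dilation-roa} applied to $W^T$, the latter holds if and only if $W^T$ is row-$0$-astic, i.e.\ $W$ is column-$0$-astic. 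This handles the codomain constraint and one half of the proof.

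It then remains to verify that, under column-$0$-asticity, the map $\ep_W$ is genuinely an erosion, that is, it commutes with the infimum. Here I would again use the conjugation identity: since $\dl_{W^T}$ commutes with suprema and complementation swaps suprema and infima (because $\big(\bigvee_k \mbf{x}^{(k)}\big)^c = \bigwedge_k (\mbf{x}^{(k)})^c$, which follows coordinatewise from $b - \bigvee_k x^{(k)}_i + a = \bigwedge_k (b - x^{(k)}_i + a)$), the composite $\ep_W(\mbf{y}) = \big(\dl_{W^T}(\mbf{y}^c)\big)^c$ sends an infimum of inputs to an infimum of outputs. Concretely, $\ep_W\big(\bigwedge_k \mbf{y}^{(k)}\big) = \big(\dl_{W^T}\big(\bigvee_k (\mbf{y}^{(k)})^c\big)\big)^c = \big(\bigvee_k \dl_{W^T}((\mbf{y}^{(k)})^c)\big)^c = \bigwedge_k \ep_W(\mbf{y}^{(k)})$. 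Conversely, if $\ep_W$ is an erosion from $\LatL$ to $\LatL$ then $\ep_W(\LatL)\subseteq\LatL$ forces column-$0$-asticity by the argument above.

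The main obstacle, or rather the only point requiring care, is the bookkeeping around complementation: one must confirm that $\mbf{x}\mapsto\mbf{x}^c$ is an order-reversing involution of $\LatL$ exchanging $\vee$ and $\wedge$, and that the infinite-valued entries of $\dl_{W^T}$ and its complement stay compatible with the $\Rmin^n$ versus $\LatL$ codomains. Alternatively, one could reprove the statement from scratch in the style of Section~\ref{sec:dilations}, testing $\ep_W$ on the extremal vectors $\mbf{a}$ and $\mbf{b}$ to extract the column supremum condition and then invoking the infimum-preservation of the $\min$-plus formula in~\eqref{eq:epsilon_W} directly; this avoids duality entirely but duplicates work. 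I would favour the duality route as it reuses Proposition~\ref{prop:dilation-roa} verbatim and makes the adjunction structure transparent.
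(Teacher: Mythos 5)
Your proof is correct. The paper itself does not write out a separate argument for this proposition: it simply says the result follows ``by a similar reasoning as in Section~\ref{sec:dilations}'', i.e.\ one re-runs the extremal-vector test directly on $\ep_W$ (from $\ep_W(\mbf{b})\leq\mbf{b}$ and $\ep_W(\mbf{a})\geq\mbf{a}$ one extracts $\bigvee_j w_{ji}=0$ for every column $i$, and conversely column-$0$-asticity gives $\ep_W(\mbf{a})=\mbf{a}$, $\ep_W(\mbf{b})=\mbf{b}$ and hence $\ep_W(\LatL)\subseteq\LatL$ by increasingness, while commutation with infima is read off the min-plus formula). You instead derive the statement as a formal corollary of Proposition~\ref{prop:dilation-roa} applied to $W^T$, conjugating by the complementation $\mbf{x}\mapsto\mbf{x}^c$, which is an order-reversing involution of $\LatL$ exchanging $\vee$ and $\wedge$. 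All the steps you list go through: $W^T$ is {\roa} iff $W$ is {\coa}; complementation is a bijection of $\LatL$, so $\ep_W(\LatL)\subseteq\LatL$ iff $\dl_{W^T}(\LatL)\subseteq\LatL$; and the conjugation turns sup-preservation of $\dl_{W^T}$ into inf-preservation of $\ep_W$. The duality route buys you reuse of Proposition~\ref{prop:dilation-roa} and makes the $\dl/\ep$ symmetry explicit (the same trick the paper itself invokes later, after Proposition~\ref{prop:generality}, via $\dl_W(\cdot^c)^c=\ep_{W^T}(\cdot)$), at the small cost of the bookkeeping you correctly flag about $\Rmax^n$ versus $\Rmin^n$ codomains; the paper's direct route is marginally more self-contained but duplicates the computation. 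Either is acceptable.
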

If $W$ is also row-$0$-astic, then $\ep_W = \alpha_W$ is the adjoint
of $\dl_W$, as stated next.
\begin{prop}
  \label{prop:adjunction-doa}
  Let $W\in\matRmax{n}$ and $\dl_W$ and $\ep_W$ be the functions
  defined by (\ref{eq:delta_W}) and (\ref{eq:epsilon_W}),
  respectively. Then ($\ep_W, \dl_W$) is an adjunction on $\LatL$ if
  and only if $W$ is \doa. Furthermore, ($\ep_W, \dl_W$) is an
  adjunction on $\LatL$ with $\dl_W$ extensive (and $\ep_W$
  anti-extensive) if and only if $W$ is a CMW matrix.
\end{prop}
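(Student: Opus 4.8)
The plan is to reduce the first equivalence almost entirely to Propositions~\ref{prop:dilation-roa} and~\ref{prop:erosion-coa} together with the adjunction identity observed above, namely that $\dl_W(\mbf{x})\leq\mbf{y}\iff\mbf{x}\leq\ep_W(\mbf{y})$ for all $\mbf{x},\mbf{y}\in\LatL$. For the direct implication, if $(\ep_W,\dl_W)$ is an adjunction on $\LatL$ then in particular both maps are self-maps of $\LatL$, i.e. $\dl_W(\LatL)\subseteq\LatL$ and $\ep_W(\LatL)\subseteq\LatL$; by the necessary conditions underlying Propositions~\ref{prop:dilation-roa} and~\ref{prop:erosion-coa} this forces $W$ to be \roa{} and \coa{}, hence \doa. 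Conversely, if $W$ is \doa{} then it is in particular \coa{}, so by Proposition~\ref{prop:erosion-coa} $\ep_W$ maps $\LatL$ into $\LatL$ (the truncation by $\mbf{b}$ is then superfluous and $\ep_W=\alpha_W$), and it is \roa{}, so $\dl_W$ maps $\LatL$ into $\LatL$; the identity observed above is then exactly the adjunction relation, and the equivalence is closed.

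For the second equivalence I would first record that \roa{}ness alone already forces $w_{ij}\leq 0$ for every $i,j$ (the maximum of each row being $0$), so that relative to the first part the CMW condition adds only the requirement $w_{ii}=0$. The implication CMW $\Rightarrow$ extensive adjunction is then quick: a CMW matrix is \doa{} (each row and each column attains its zero maximum on the diagonal), hence $(\ep_W,\dl_W)$ is an adjunction by the first part; extensivity is immediate from $\dl_W(\mbf{x})_i=\bigvee_j(w_{ij}+x_j)\geq w_{ii}+x_i=x_i$, and the anti-extensivity of $\ep_W$ follows either symmetrically from $\ep_W(\mbf{y})_i\leq y_i-w_{ii}=y_i$ or, more structurally, from the general fact that in an adjunction the lower adjoint is extensive if and only if the upper adjoint is anti-extensive.

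The one genuinely non-formal step is the converse, deducing $w_{ii}=0$ from extensivity, and this is where I expect the main (if modest) obstacle to lie. Since $(\ep_W,\dl_W)$ is already an adjunction, $W$ is \doa{}, so $w_{ij}\leq 0$ throughout and it only remains to exclude $w_{ii}<0$. I would probe extensivity at the impulse vector $\mbf{e^{(i)}}$: its $i$-th output component is $\max\bigl(w_{ii}+b,\ \max_{j\neq i}(w_{ij}+a)\bigr)$, and extensivity demands this be at least $e^{(i)}_i=b$. Every off-diagonal contribution satisfies $w_{ij}+a\leq a<b$, using $w_{ij}\leq 0$ and the standing strict inequality $a<b$; hence the maximum can reach $b$ only through the diagonal term, forcing $w_{ii}+b\geq b$, i.e. $w_{ii}\geq 0$, and combined with $w_{ii}\leq 0$ this yields $w_{ii}=0$. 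It is precisely here that the hypothesis $a<b$ and the extremal choice $\mbf{e^{(i)}}$ are essential; everything else rests on bookkeeping built upon the earlier propositions.
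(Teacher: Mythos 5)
Your proof is correct and follows essentially the same route as the paper's: the first equivalence is assembled from Propositions~\ref{prop:dilation-roa} and~\ref{prop:erosion-coa} together with the residuation identity $\dl_W(\mbf{x})\leq\mbf{y}\iff\mbf{x}\leq\ep_W(\mbf{y})$, and the only substantive step, deducing $w_{ii}=0$ from extensivity, is handled exactly as in the paper by evaluating $\dl_W$ at the impulse vector $\mbf{e^{(i)}}$ and using $a<b$. You merely spell out details the paper leaves implicit (e.g.\ that row-$0$-asticity already gives $w_{ij}\leq 0$ and that a CMW matrix is \doa), which is fine.
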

\begin{proof}
  Most of the points have already been addressed above or are
  straightforward from Proposition~\ref{prop:dilation-roa}. To see
  that $\dl_W$ extensive implies $w_{ii} = 0$ for all $i$, just remark
  that $w_{ii} < 0$ would imply $\dl_W(\mbf{e^{(i)}})_i< b = e^{(i)}_i$.
\end{proof}

\subsection{Generality of $(\ep_W, \ \dl_W)$}
\label{sec:generality}

The dilation $\dl_W$, already introduced
in~\cite{blusseau18tropical,maragos13representations,velasco-forero15nonlinear},
can be viewed as a generalisation of the non-local and adaptive
mathematical morphology~\cite{salembier09study,velasco13nonlocal} on
signals and images. Each column $W_{:,j}$ of $W$ represents the
structuring function corresponding to pixel (or instant) $j$.

As pointed
out in~\cite{heijmans1990algebraic,maragos13representations}, the dilations
that can be written as matrix-based max-plus products like
Eq.~\eqref{eq:delta_W} are the shift (or \emph{vertical-translation})
invariant ones. However the result stated
in~\cite{heijmans1990algebraic,maragos13representations} does not
directly apply to our setting where the lattice $\LatL$ is
different from the lattice of scalars which define vertical
translation of signal values, usually
$\R\cup\lbrace -\infty, +\infty\rbrace$. Still, the same idea holds
here with some adaptation, as stated in the next proposition.

\begin{prop}
  \label{prop:generality}
  Let $\dl : \LatL \to \LatL$ be a dilation. Then there exists
  $W \in\matRmax{n}$ such that $\dl = \dl_W$ if and only if
  \begin{equation}
    \label{eq:shift_invariance}
    \forall \lambda \leq 0, \forall \mbf{x}\in\LatL,\;\; \dl\big( (\lambda + \mbf{x})\vee \mbf{a} \big) = \big(\lambda + \dl(\mbf{x})\big)\vee \mbf{a}.
  \end{equation}
  In that case, the matrix $W$ whose $j$-th column is
  $W_{:,j}= \dl(\mbf{e^{(j)}})-\mbf{b}$ for $1\leq j \leq n$, is such
  a representing matrix.
\end{prop}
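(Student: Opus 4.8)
The plan is to prove both directions of the equivalence. For the "only if" direction, I would suppose $\dl = \dl_W$ for some $W \in \matRmax{n}$ and verify the shift-invariance property~\eqref{eq:shift_invariance} directly from the definition~\eqref{eq:delta_W}. The key observation is that for $\lambda \leq 0$, the expression $(\lambda + \mbf{x}) \vee \mbf{a}$ has components $\max(\lambda + x_j, a)$, and I would compute $\dl_W$ applied to this vector componentwise. Because $W$ is row-$0$-astic (Proposition~\ref{prop:dilation-roa}), the clamping against $\mbf{a}$ interacts cleanly with the max-plus product: the term $w_{ij} + a$ appearing in the sup is always $\leq a$ and matches the lower clamp, so one can factor the $\vee \mbf{a}$ outside and pull the scalar $\lambda$ through the max-plus product using its linearity over $\vee$ and $+$. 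This reduces to a routine manipulation of suprema.

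For the "if" direction, I would assume $\dl$ satisfies~\eqref{eq:shift_invariance} and define $W$ columnwise by $W_{:,j} = \dl(\mbf{e^{(j)}}) - \mbf{b}$, then show $\dl = \dl_W$. The central idea is to decompose an arbitrary $\mbf{x} \in \LatL$ as a supremum of scaled impulse vectors: I would check that $\mbf{x} = \bigvee_{j=1}^n \big((x_j - b + \mbf{e^{(j)}}) \vee \mbf{a}\big)$, or an equivalent form, so that each summand is a shifted-and-clamped impulse of the type appearing in~\eqref{eq:shift_invariance} with $\lambda = x_j - b \leq 0$. Since $\dl$ is a dilation it commutes with $\vee$, and applying~\eqref{eq:shift_invariance} to each summand turns $\dl(\mbf{x})$ into $\bigvee_j \big((x_j - b + \dl(\mbf{e^{(j)}})) \vee \mbf{a}\big)$. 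Substituting $\dl(\mbf{e^{(j)}}) = W_{:,j} + \mbf{b}$ collapses this to $\bigvee_j (x_j + W_{:,j}) \vee \mbf{a}$, whose $i$-th component is $\big(\bigvee_j w_{ij} + x_j\big) \vee a = \dl_W(\mbf{x})_i$, where the final clamp is absorbed because $W$ is necessarily row-$0$-astic (as $\dl$ maps into $\LatL$).

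I expect the main obstacle to be justifying the impulse decomposition of $\mbf{x}$ and verifying that each summand genuinely falls under the hypothesis~\eqref{eq:shift_invariance}, with careful attention to the lower clamp $\vee \mbf{a}$. Specifically, I must confirm that $\big((x_j - b) + \mbf{e^{(j)}}\big) \vee \mbf{a}$ has its $j$-th component equal to $x_j$ (using $e^{(j)}_j = b$) and all other components equal to $a$ (using $e^{(j)}_k = a$ for $k \neq j$ together with $x_j - b + a \leq a$), so that the supremum over $j$ reconstructs $\mbf{x}$ exactly. The bookkeeping around whether the clamp against $\mbf{a}$ is active or inactive is the delicate point; once the decomposition is established, commutation of $\dl$ with $\vee$ and a direct application of the shift-invariance identity finish the argument with only routine computation.
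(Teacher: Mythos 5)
Your overall route is the same as the paper's: verify \eqref{eq:shift_invariance} directly when $\dl=\dl_W$, and conversely decompose $\mbf{x}=\bigvee_{1\leq j\leq n}\big((x_j-b+\mbf{e^{(j)}})\vee\mbf{a}\big)$, push the dilation through the supremum, and apply \eqref{eq:shift_invariance} term by term with $\lambda_j=x_j-b\leq 0$. The decomposition and its verification are exactly right. The gap is in the final step, where you discard the residual clamp $\vee\,\mbf{a}$ ``because $W$ is necessarily row-$0$-astic (as $\dl$ maps into $\LatL$)''. That justification is backwards. Row-$0$-asticity of your $W$ means $\bigvee_j w_{ij}=0$, i.e.\ $\bigvee_j \dl(\mbf{e^{(j)}})_i=b$ for every $i$, i.e.\ $\dl(\mbf{b})=\mbf{b}$. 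The fact that $\dl$ maps into $\LatL$ only gives $\dl(\mbf{b})\leq\mbf{b}$, hence $\bigvee_j w_{ij}\leq 0$, which is the wrong inequality: to absorb the lower clamp you need $\bigvee_j(w_{ij}+x_j)\geq a$, for which you want $\bigvee_j w_{ij}\geq 0$. The constant map $\dl\equiv\mbf{a}$ is a dilation into $\LatL$ with $\dl(\mbf{b})=\mbf{a}$, so ``maps into $\LatL$'' alone cannot yield row-$0$-asticity. Nor can you invoke Proposition~\ref{prop:dilation-roa} at this point, since $\dl=\dl_W$ is precisely what you are in the middle of proving.

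The missing ingredient is a proof that $\dl(\mbf{b})=\mbf{b}$, and this is where the hypothesis \eqref{eq:shift_invariance} must be used one more time. The paper derives it from $\dl(\mbf{a})=\mbf{a}$ (a dilation preserves the least element, being the supremum of the empty family) combined with \eqref{eq:shift_invariance} applied to $\lambda=a-b$ and $\mbf{x}=\mbf{b}$. From $\mbf{b}=\bigvee_j\mbf{e^{(j)}}$ it then follows that $\bigvee_j\dl(\mbf{e^{(j)}})=\mbf{b}$, so for each $i$ the (finite) supremum is attained at some $j_i$ with $\dl(\mbf{e^{(j_i)}})_i=b$; the $j_i$-th term of your supremum then already equals $x_{j_i}\geq a$ in coordinate $i$, and the clamp is inactive. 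Once you insert this lemma, the rest of your argument closes as written; without it the absorption step is unjustified.
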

We see that this class of dilations is very broad and covers the most
commonly used in morphological image and signal processing: dilations
based on structuring functions, possibly non-local, varying in space
and non-flat.
\begin{proof}[Proposition~\ref{prop:generality}]
  If $\dl = \dl_W$ for some $W \in\matRmax{n}$, then it is
  straightforward to check that $\dl$ verifies
  Eq.~\eqref{eq:shift_invariance}.

  Conversely, suppose $\dl$ verifies
  Eq.~\eqref{eq:shift_invariance}. Then we first remark that
  $\dl(\mbf{b}) = \mbf{b}$. Indeed, on the one hand,
  $\dl(\mbf{a}) = \mbf{a}$ as $\mbf{a}=\bigwedge\LatL$ and $\dl$ is a
  dilation mapping $\LatL$ to $\LatL$. On the other hand,
  $\dl(\mbf{a}) = \dl\big( (a-b) + \mbf{b}\big) = (a-b) +
  \dl(\mbf{b})$ by Eq.~\eqref{eq:shift_invariance}. Hence
  $\mbf{a} = (a-b) + \dl(\mbf{b})$ which means that
  $\dl(\mbf{b}) = \mbf{b}$.\\
  As a consequence: for any $i\in\setBracket{1}{n}$, there is a
  $j_i\in\setBracket{1}{n}$ such that $\dl(\mbf{e^{(j_i)}})_i = b$. This
  is simply because $\mbf{b} = \bigvee_{1\leq j \leq n} \mbf{e^{(j)}}$
  so
  $\mbf{b}= \dl\big(\bigvee_{1\leq j \leq n}\mbf{e^{(j)}}\big) =
  \bigvee_{1\leq j \leq n} \dl(\mbf{e^{(j)}})$, which means that, for
  any $i$, $b = \bigvee_{1\leq j \leq n} \dl(\mbf{e^{(j)}})_i$ and
  finally that $\dl(\mbf{e^{(j_i)}})_i = b$ for some $j_i$, as the
  supremum is reached here.\\
  Now, let $\mbf{x}\in\LatL$. Then it can be decomposed as
  $\mbf{x} = \bigvee_{1\leq j \leq n} \big[(\lambda_j +
  \mbf{e^{(j)}})\vee \mbf{a}\big]$ with $\lambda_j = x_j - b \leq
  0$. Hence, as $\dl$ is a dilation verifying
  Eq.~\eqref{eq:shift_invariance}, we get
  $\dl(\mbf{x}) = \bigvee_{1\leq j \leq n} \big[\big(\lambda_j +
  \dl(\mbf{e^{(j)}})\big)\vee \mbf{a}\big]$. We now use the result
  stated just above: for any $i\in\setBracket{1}{n}$ there is a
  $j_i\in\setBracket{1}{n}$ such that
  $\lambda_{j_i} + \dl(\mbf{e^{(j_i)}})_i= x_{j_i} - b + b = x_{j_i}
  \geq a$. Therefore,
  $\bigvee_{1\leq j \leq n} \big[\big(\lambda_j +
  \dl(\mbf{e^{(j)}})\big)\vee \mbf{a}\big] = \bigvee_{1\leq j \leq n}
  \lambda_j + \dl(\mbf{e^{(j)}})$ from which we finally get
  \begin{equation}
    \label{eq:delta_x_general}
    \dl(\mbf{x}) = \bigvee_{1\leq j \leq n} \lambda_j + \dl(\mbf{e^{(j)}}) = \bigvee_{1\leq j \leq n} (x_j-b) + \dl(\mbf{e^{(j)}}) = \bigvee_{1\leq j \leq n} x_j + [\dl(\mbf{e^{(j)}}) - \mbf{b}]
  \end{equation}
  which is exactly $W\mul\mbf{x}$ for $W$ the matrix with columns
  $W_{:,j}= \dl(\mbf{e^{(j)}})-\mbf{b}$ for $1\leq j \leq n$.
\end{proof}
  Note that the dual of Proposition~\ref{prop:generality} obviously holds: the
  erosions $\ep:\LatL\to\LatL$ which can be written as $\ep_W$ for some
  $W\in\matRmax{n}$ are those for which
  \begin{equation}
    \label{eq:shift_invariance2}
    \forall \lambda \geq 0, \forall \mbf{x}\in\LatL,\;\; \ep\big( (\lambda + \mbf{x})\wedge \mbf{b} \big) = \big(\lambda + \ep(\mbf{x})\big)\wedge \mbf{b}.
  \end{equation}
  To show this it is sufficient to see that $\ep$ verifies
  \eqref{eq:shift_invariance2} if and only if the dilation
  $\dl = \ep(\cdot^c)^c$ verifies \eqref{eq:shift_invariance}, and
  recall that $\dl_W(\cdot^c)^c = \ep_{W^T}(\cdot)$.

\subsection{Equivalent dilations and erosions}
\label{sec:eqivalent_operators}

In Proposition~\ref{prop:generality} we exhibited one possible matrix
$W\in\matRmax{n}$ that represents a dilation, but this matrix is not
unique. In this section we characterise the set of such matrices and
show that it is a complete lattice.

Since we are interested in adjunctions $(\ep_W, \dl_W)$, following
Proposition~\ref{prop:adjunction-doa} we focus on the set of matrices
in $\matRmax{n}$ that are \doa, which we denote by $\Doa{n}$. Let the
equivalence relation defined for any
two matrices $A, B\in\Doa{n}$ by
\begin{equation}
  \label{eq:equivalence-relation-doa}
  A \sim B \iff \dl_A = \dl_B \iff \forall\mbf{x}\in \LatL,\ \dl_A(\mbf{x}) = \dl_B(\mbf{x})
\end{equation}
and note
$\mathcal{C}_W = \left\lbrace M\in\Doa{n}, \ M\sim W \right\rbrace$
the equivalence class of any $W\in\Doa{n}$.
We provide an easy characterisation of $\mathcal{C}_W$ that will show
useful in numerical computations of the morphological operators
defined earlier. For any $u\in\Rmax$ let $I_{u}$ denote the matrix in
$\matRmax{n}$ whose coefficients are all equal to $u$. Then two
equivalent matrices are characterised as follows.
\begin{prop}
  \label{prop:characterise-equivalence-class}
  Let $M, W\in\Doa{n}$. Then
  \begin{equation}
    \label{eq:characterise-equivalence-class}
    \begin{array}{lllll}
      M\in\mathcal{C}_W &\iff 
      & M\vee I_{a-b} = W\vee I_{a-b} & \iff 
      & 
        \left\lbrace
        \begin{array}{cl}
          m_{ij} = w_{ij} & \text{ if }  w_{ij} > a-b\\
          m_{ij} \leq a-b & \text{ otherwise. }
        \end{array}\right.
      \\
    \end{array}
  \end{equation}
\end{prop}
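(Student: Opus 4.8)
The plan is to prove the two displayed equivalences separately. The second one, between $M\vee I_{a-b}=W\vee I_{a-b}$ and the case distinction on the right, is a direct unpacking of the entrywise identity $(M\vee I_{a-b})_{ij}=\max(m_{ij},a-b)$: when $w_{ij}>a-b$ the common value $\max(w_{ij},a-b)=w_{ij}$ forces $m_{ij}=w_{ij}$, and when $w_{ij}\le a-b$ the common value is $a-b$, which merely requires $m_{ij}\le a-b$. So I would dispatch it in one line and concentrate on the first equivalence, which by \eqref{eq:equivalence-relation-doa} reads $\dl_M=\dl_W\iff M\vee I_{a-b}=W\vee I_{a-b}$. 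The guiding intuition, to be made precise in both directions, is that a coefficient $w_{ij}\le a-b$ is \emph{invisible} to $\dl_W$: since $W\in\Doa{n}$ is in particular row-$0$-astic, Proposition~\ref{prop:dilation-roa} gives $\dl_W(\mbf{x})\ge\mbf{a}$, while a sub-threshold coefficient contributes at most $(a-b)+x_j\le a$ to the supremum in \eqref{eq:delta_W}.

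For the implication $M\vee I_{a-b}=W\vee I_{a-b}\Rightarrow\dl_M=\dl_W$, the key step is the truncation identity
\[\dl_W(\mbf{x})_i=\bigvee_{j}\big(\max(w_{ij},a-b)+x_j\big),\qquad \mbf{x}\in\LatL.\]
I would prove it by writing $\max(w_{ij},a-b)+x_j=\max\big(w_{ij}+x_j,(a-b)+x_j\big)$ and distributing the supremum over $j$, so that the right-hand side equals $\max\big(\dl_W(\mbf{x})_i,\,(a-b)+\bigvee_j x_j\big)$; since $\bigvee_j x_j\le b$ the second argument is at most $a\le\dl_W(\mbf{x})_i$ and may be dropped. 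The same identity holds verbatim for $M$, and its right-hand side involves $W$ only through $(W\vee I_{a-b})_{ij}=\max(w_{ij},a-b)$, so the hypothesis $M\vee I_{a-b}=W\vee I_{a-b}$ immediately yields $\dl_M(\mbf{x})=\dl_W(\mbf{x})$ for every $\mbf{x}$.

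For the converse, $\dl_M=\dl_W\Rightarrow M\vee I_{a-b}=W\vee I_{a-b}$, I would recover the super-threshold entries from the dilation by probing it with the impulse vectors $\mbf{e^{(j)}}$. Evaluating \eqref{eq:delta_W} gives $\dl_W(\mbf{e^{(j)}})_i=\max\big(w_{ij}+b,\ \big(\bigvee_{k\neq j}w_{ik}\big)+a\big)$, and since every entry of a doubly-$0$-astic matrix is $\le 0$ the second argument is at most $a$. Thus whenever $w_{ij}>a-b$ we have $w_{ij}+b>a$, the first argument strictly dominates, and $\dl_W(\mbf{e^{(j)}})_i=w_{ij}+b$. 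Running the same computation for $M$ and using $\dl_M(\mbf{e^{(j)}})=\dl_W(\mbf{e^{(j)}})$ shows that $\max\big(m_{ij}+b,\ \big(\bigvee_{k\neq j}m_{ik}\big)+a\big)=w_{ij}+b>a$; as the second $M$-argument is again $\le a$, the maximum must be attained by $m_{ij}+b$, forcing $m_{ij}=w_{ij}$. Hence $w_{ij}>a-b\Rightarrow m_{ij}=w_{ij}$, and by the symmetric argument $m_{ij}>a-b\Rightarrow w_{ij}=m_{ij}$; combining the two gives $\max(m_{ij},a-b)=\max(w_{ij},a-b)$ in every case, i.e. $M\vee I_{a-b}=W\vee I_{a-b}$.

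The two distributivity manipulations are routine; the step needing the most care, and where the hypotheses genuinely enter, is the \emph{invisibility} threshold, which in both directions hinges on the exact value $a-b$. The $\Leftarrow$ direction needs $(a-b)+\bigvee_j x_j\le\dl_W(\mbf{x})_i$ so that the floor never raises the supremum, while the $\Rightarrow$ direction needs the impulse response to isolate $w_{ij}$, which succeeds precisely because $w_{ij}>a-b$ makes $w_{ij}+b$ overtake the baseline $a$. Both estimates rely on doubly-$0$-asticity, through the nonpositivity of the coefficients and through $\dl_W(\mbf{x})\ge\mbf{a}$; were $W$ not $0$-astic these bounds would fail and the relevant threshold would shift.
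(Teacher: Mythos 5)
Your proposal is correct and follows essentially the same route as the paper: the forward direction is the paper's observation that $I_{a-b}\mul\mbf{x}\leq\mbf{a}\leq W\mul\mbf{x}$, so $(W\vee I_{a-b})\mul\mbf{x}=W\mul\mbf{x}$, and the converse probes with the impulse vectors $\mbf{e^{(j)}}$ exactly as in the paper's proof. If anything you are slightly more explicit than the paper in justifying why $\dl_M(\mbf{e^{(j)}})_i=m_{ij}+b$ for $M$ (where one does not know a priori that $m_{ij}>a-b$), which is a welcome clarification rather than a deviation.
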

This means that if $W$ has coefficients not larger than $a-b$, these
can be set to any value not larger than $a-b$, including $-\infty$,
and can therefore be ignored in the computation of $\dl_W(\mbf{x})$.
\begin{proof}[Proposition~\ref{prop:characterise-equivalence-class}]
  The second equivalence is just a matter of writing, so we prove the
  first one.
  Let us first notice that for any
  $\mbf{x}\in\LatL, \ I_{(a-b)}\mul \mbf{x} \leq
  \mbf{a}$. Therefore
  $
  \forall \mbf{x}\in \LatL, \;\; (W\vee I_{(a-b)})\mul \mbf{x} =
  (W\mul\mbf{x}) \vee (I_{(a-b)}\mul \mbf{x}) = W\mul\mbf{x}
  $,
  since $W\mul\mbf{x}\geq \mbf{a}$, and this holds for $M$ too.
  Hence, if $M\vee I_{a-b} = W\vee I_{a-b}$, then for any
  $\mbf{x}\in \LatL, \ W\mul\mbf{x} = (W\vee I_{a-b})\mul\mbf{x} =
  (M\vee I_{a-b})\mul\mbf{x} = M\mul\mbf{x}, $ which means
  $M\in\mathcal{C}_W$.

  Conversely, suppose that $M\sim W$ and that $w_{i_0 j_0} > a-b$ for some
  $i_0, j_0 \in\setBracket{1}{n}$. Let
  $\mbf{x} = \mbf{e^{(j_0)}}\in \LatL$, \emph{i.e.} $x_{j_0}= b$ and
  $x_j=a \;\; \forall j\neq j_0$. The $0$-asticity of $W$ and $M$
  implies $(W\mul\mbf{x})_{i_0} = b+w_{i_0 j_0}$ and
  $(M\mul\mbf{x})_{i_0} = b+m_{i_0 j_0}$, hence
  $m_{i_0j_0} = w_{i_0j_0}$. We have just shown that
  $\forall i, j \in\setBracket{1}{n},\ \left(w_{ij} > a-b \Rightarrow
    w_{ij} = m_{ij}\right)$ and by symmetry of the equivalence
  relation $ \left(m_{ij} > a-b \Rightarrow w_{ij} = m_{ij}\right)$,
  which combined yields $\max(m_{ij},\ a-b) = \max(w_{ij},\ a-b)$. So
  finally $M\sim W \Rightarrow M\vee I_{a-b} = W\vee I_{a-b}$.
\end{proof}
While it is clear that if $A, B\in \mathcal{C}_W$ then
$A\vee B \in \mathcal{C}_W$, the characterisation in
Proposition~\ref{prop:characterise-equivalence-class} shows that
$\mathcal{C}_W$ is also closed under infimum, that is:
$A \wedge B \in \mathcal{C}_W$. This has the following straightforward
consequence.
\begin{prop}
  \label{prop:L_W-complete-lattice}
  Let $W\in\Doa{n}$ and $\leq$ the partial ordering on $\mathcal{C}_W$
  defined by $A\leq B \iff A\vee B = B \iff a_{ij}\leq b_{ij} \ \forall
  i, j\in\setBracket{1}{n}$. Then
  \begin{itemize}
  \item $(\mathcal{C}_W, \leq)$ is a complete lattice (with
    coefficient-wise supremum and infimum);
  \item Its greatest element is $\overline{W} = W\vee I_{a-b}$;
  \item Its smallest element is $\underline{W}$, defined by
    $\underline{w}_{ij} =\left\lbrace
    \begin{array}{cl}
      w_{ij} & \text{ if } w_{ij} > a-b\\
      -\infty & \text{ otherwise.}
    \end{array}\right.
$
  \end{itemize}
\end{prop}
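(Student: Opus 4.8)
The plan is to read off the structure of $\mathcal{C}_W$ directly from the characterisation in Proposition~\ref{prop:characterise-equivalence-class}. First I would partition the index set $\setBracket{1}{n}^2$ into the \emph{fixed} positions, where $w_{ij} > a-b$, and the \emph{free} positions, where $w_{ij}\leq a-b$. By that proposition, every $M\in\mathcal{C}_W$ satisfies $m_{ij}=w_{ij}$ on fixed positions and $m_{ij}\leq a-b$ (possibly $-\infty$) on free positions, and conversely any doubly-$0$-astic matrix meeting these entry conditions lies in $\mathcal{C}_W$. Thus, as a set, $\mathcal{C}_W$ is cut out by prescribing a single value on each fixed coefficient and letting each free coefficient range over $\{u\in\Rmax : u\leq a-b\}$, subject only to the doubly-$0$-astic constraint.

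The key step is to show that this last constraint is in fact automatic, and in particular is preserved under arbitrary coefficient-wise suprema and infima. Since $a<b$ we have $a-b<0$, and because $W\in\Doa{n}$ every entry satisfies $w_{ij}\leq 0$ while each row and each column attains the value $0$. Every position where $w_{ij}=0$ has $w_{ij}>a-b$, hence is fixed. Consequently, for any family $(M^{(k)})_{k\in K}$ in $\mathcal{C}_W$, its coefficient-wise supremum and infimum both keep the value $0$ at those distinguished positions (where all members agree with $W$) and keep every other entry $\leq 0$ (fixed entries are $\leq 0$, free entries are $\leq a-b<0$). Therefore the resulting matrix is again doubly-$0$-astic and still meets the entry conditions, so by Proposition~\ref{prop:characterise-equivalence-class} it belongs to $\mathcal{C}_W$. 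This preservation of doubly-$0$-asticity for \emph{infinite} families is the main point to verify carefully; everything else is formal.

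With this in hand I would finish as follows. The stated equivalences $A\leq B \iff A\vee B = B \iff a_{ij}\leq b_{ij}\ \forall i,j$ are immediate from the definition of the coefficient-wise order. Since $\mathcal{C}_W$ is closed under arbitrary coefficient-wise suprema and infima and inherits the partial order $\leq$, it is a complete lattice in which the lattice operations are computed coefficient-wise; equivalently, $\mathcal{C}_W$ is order-isomorphic to the product lattice assigning to each fixed position the singleton $\{w_{ij}\}$ and to each free position the complete chain $\{u\in\Rmax : u\leq a-b\}$. Finally I would identify the extremal elements by taking coefficient-wise extremes: the greatest element places each free coefficient at its top $a-b$ and each fixed coefficient at $w_{ij}$, which is exactly $\max(w_{ij}, a-b) = (W\vee I_{a-b})_{ij}$, i.e.\ $\overline{W}$; the smallest element places each free coefficient at $-\infty$ and each fixed coefficient at $w_{ij}$, which is precisely the matrix $\underline{W}$ of the statement.
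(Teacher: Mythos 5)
Your proof is correct and follows essentially the same route as the paper, which derives the result directly from the characterisation of $\mathcal{C}_W$ in Proposition~\ref{prop:characterise-equivalence-class} and its closure under coefficient-wise supremum and infimum. You actually supply the one detail the paper leaves implicit, namely that doubly-$0$-asticity survives arbitrary (infinite) coefficient-wise suprema and infima because the positions where $w_{ij}=0$ are fixed; this is a welcome addition but not a different argument.
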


\subsection{Iterated operators and granulometries}
\label{sec:iterated_operators}

In this section, given $W\in\Doa{n}$ and $p\in\N^*$, we focus on the
iterated dilations and erosions $\dl^p_W$ and $\ep^p_W$, as well as
their sup and inf integrations, that we note respectively
$D^{[p]}_W \ \dot{=} \ \bigvee_{k=1}^p\dl^k_{W}$ and
$E^{[p]}_W \ \dot{=} \ \bigwedge_{k=1}^p\ep^k_{W}$. One can easily
check that both $(\ep^p_W, \dl^p_W)$ and $(E^{[p]}_W, D^{[p]}_W)$ are
adjunctions. We note respectively
$\gm^{[p]}_W \ \dot{=} \dl^p_W\ep^p_W$ and
$G^{[p]}_W \ \dot{=} D^{[p]}_WE^{[p]}_W$ their corresponding openings.

Note that if $\dl_W$ is extensive, or equivalently if $W$ is a CMW
matrix (Prop.~\ref{prop:adjunction-doa}), then these adjunctions are
equal: $(\ep^p_{W}, \dl^p_{W}) = (E^{[p]}_{W}, D^{[p]}_{W})$. As this
is not true in general, both adjunctions are worth studying. In
particular, we shall examine whether $(\gm^{[p]}_W)_{p\in\N^*}$ and $(G^{[p]}_W)_{p\in\N^*}$ define granulometries, that is to say families of openings that are decreasing with $p$. The answer is yes and it is a general result that does not depend on the
representation of the adjunction.

\begin{prop}
  \label{prop:granulometries}
  Let $(\ep, \ \dl)$ be an adjunction on a complete lattice. For any
  integer $p\in\N^*$, let us note $\gamma_p = \dl^p\ep^p$ and
  $G_p = D_pE_p$ the openings associated to the adjunctions
  $(\ep^p, \ \dl^p)$ and
  $\left(E_p = \bigwedge_{1\leq k \leq p}\ep^k, D_p = \bigvee_{1\leq k
      \leq p}\dl^k\right)$, respectively. Then $(\gamma_p)_{p\in\N^*}$
  and $(G_p)_{p\in\N^*}$ are granulometries.
\end{prop}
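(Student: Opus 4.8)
The plan is to exploit the fact that, with the definition used here, a family of openings is a granulometry precisely when it is decreasing in $p$. Since $(\ep^p,\dl^p)$ and $(E_p,D_p)$ are adjunctions, the maps $\gamma_p=\dl^p\ep^p$ and $G_p=D_pE_p$ are genuine openings, so the only thing to establish is the monotonicity $\gamma_q\leq\gamma_p$ and $G_q\leq G_p$ whenever $q\geq p$. The single tool I would use throughout is the standard description of the fixed-point (invariance) set $\mathrm{Inv}(\cdot)$ of an opening coming from an adjunction: for any adjunction $(\eta,\delta)$ on a complete lattice the opening $\delta\eta$ satisfies $\delta\eta\delta=\delta$, so its set of fixed points coincides with the image $\mathrm{Im}(\delta)$ of $\delta$; moreover two openings compare, $\gamma'\leq\gamma$, if and only if their invariance sets are nested, $\mathrm{Inv}(\gamma')\subseteq\mathrm{Inv}(\gamma)$. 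Reducing the comparison of operators to an inclusion of fixed-point sets is what makes both cases uniform.

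For $(\gamma_p)_{p}$ I would apply this with $(\eta,\delta)=(\ep^p,\dl^p)$, so $\mathrm{Inv}(\gamma_p)=\mathrm{Im}(\dl^p)$. Since $\dl^q(y)=\dl^p\big(\dl^{q-p}(y)\big)$ for $q\geq p$, every element of $\mathrm{Im}(\dl^q)$ already lies in $\mathrm{Im}(\dl^p)$, whence $\mathrm{Inv}(\gamma_q)\subseteq\mathrm{Inv}(\gamma_p)$ and $\gamma_q\leq\gamma_p$; thus $(\gamma_p)_p$ is decreasing, i.e.\ a granulometry. (Equivalently, and even more directly, $\gamma_{p+1}=\dl^p(\dl\ep)\ep^p\leq\dl^p\ep^p=\gamma_p$, using that the opening $\dl\ep$ is anti-extensive and that $\dl^p,\ep^p$ are increasing.)

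For $(G_p)_{p}$ the same scheme applies, but the key inclusion requires a short computation. Here $\mathrm{Inv}(G_p)=\mathrm{Im}(D_p)$, so I must show $\mathrm{Im}(D_{p+1})\subseteq\mathrm{Im}(D_p)$. The crucial identity is that, because $\dl$ (hence each $\dl^k$) commutes with $\vee$,
\[
D_p\big(y\vee\dl(y)\big)=\bigvee_{k=1}^{p}\big(\dl^k(y)\vee\dl^{k+1}(y)\big)=\bigvee_{k=1}^{p+1}\dl^k(y)=D_{p+1}(y).
\]
Consequently every $D_{p+1}(y)$ has the form $D_p(z)$ with $z=y\vee\dl(y)$, giving $\mathrm{Im}(D_{p+1})\subseteq\mathrm{Im}(D_p)$, hence $\mathrm{Inv}(G_{p+1})\subseteq\mathrm{Inv}(G_p)$ and $G_{p+1}\leq G_p$; iterating yields $G_q\leq G_p$ for all $q\geq p$.

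The easy half is $(\gamma_p)$, which is essentially one line. The main obstacle is $(G_p)$: unlike $\gamma_{p+1}$, the operator $G_{p+1}=D_{p+1}E_{p+1}$ does not factor through $G_p$, and the naive termwise estimate only gives the lossy bound $G_p\leq\bigvee_{k\leq p}\gamma_k=\gamma_1=G_1$, which shows $G_p\leq G_1$ but not the required step $G_{p+1}\leq G_p$. The way around this is precisely to compare fixed-point sets rather than the operators themselves, so that the entire difficulty collapses onto the identity $D_{p+1}(y)=D_p\big(y\vee\dl(y)\big)$ exhibiting $\mathrm{Im}(D_{p+1})\subseteq\mathrm{Im}(D_p)$.
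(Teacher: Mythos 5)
Your proof is correct and rests on exactly the same key observation as the paper's, namely the identity $D_{p+1}=D_p(\mathrm{id}\vee\dl)$; your reduction to the inclusion of invariance sets $\mathrm{Inv}(G_{p+1})=\mathrm{Im}(D_{p+1})\subseteq\mathrm{Im}(D_p)=\mathrm{Inv}(G_p)$ is just a repackaging of the paper's direct computation $G_pG_{p+1}=D_pE_pD_p(\mathrm{id}\vee\dl)E_{p+1}=D_{p+1}E_{p+1}=G_{p+1}$, and your parenthetical one-liner for $(\gamma_p)_p$ is precisely the paper's argument.
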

\begin{proof}
  We first show that the family of openings $(\gamma_p)_{p\geq 1}$
  decreases with $p$, hence a granulometry. This is straightforward by
  writing
  $\gamma_{p+1} = \dl^{p+1}\ep^{p+1} = \dl^{p}\gamma_1\ep^{p} \leq
  \dl^{p}\ep^{p} = \gamma_p.$ Secondly, regarding $(G_p)_{p\geq 1}$, we show $G_{p+1} \leq G_{p}$ by
  proving that $G_pG_{p+1} = G_{p+1}$. We obtain this by remarking
  that $D_{p+1} = D_p(id\bigvee \dl)$, which makes it an invariant of
  $G_p$:
  $G_p D_{p+1} = D_pE_pD_p(id\bigvee\dl) = D_p(id\bigvee\dl) =
  D_{p+1}$. Then we can conclude
  $G_pG_{p+1} = G_p D_{p+1}E_{p+1} = D_{p+1}E_{p+1} = G_{p+1}$.
\end{proof}
To conclude this section, let us write $\dl^p_W$, $\ep^p_W$,
$D^{[p]}_W $ and $E^{[p]}_W $ as dilations and erosions represented by
one suitable {\doa} matrix. This will help in their graph
interpretation of the next section. The associativity of $\mul$ yields
$\forall \mbf{x}\in\LatL, \ \dl^p_{W}(\mbf{x})=W\mul\dots\mul
W\mul\mbf{x} = W^p\mul\mbf{x}$, therefore $\dl^p_{W} = \dl_{W^p}$. We
obtain similarly $\ep^p_{W} = \ep_{W^p}$. The distributivity of $\mul$
over $\vee$ yields
$D^{[p]}_W (\mbf{x}) = \bigvee_{k=1}^p\dl^k_{W}(\mbf{x}) =
\bigvee_{k=1}^p(W^k\mul\mbf{x}) = (\bigvee_{k=1}^pW^k)\mul \mbf{x}$
therefore $D^{[p]}_W = \dl_{S_p(W)}$, with
$S_p(W)\ \dot{=} \ \bigvee_{k=1}^p W^k$. Similarly,
$E^{[p]}_W = \ep_{S_p(W)}$. Note that by the same arguments and
Proposition~\ref{prop:adjunction-doa}, we get that $\Doa{n}$ is closed
under $\mul$ and $\vee$.

\section{Graph interpretations}
\label{sec:graph-tropical}

\subsection{Weighted graphs}
\label{sec:weighted-graphs}

Let $W\in\matRmax{n}$ and $\Gcal(W) = (V,E)$ be a weighted and directed
graph containing $n$ vertices whose $n\times n$ adjacency matrix is
$W$, with the convention that $w_{ij}>-\infty$ if and only if $(i,j)\in E$. We now recall that a \emph{path} from vertex $i$ to vertex $j$ in $\Gcal(W)$ is a tuple $\pi = (k_1, \dots, k_l)$ of vertices such that $k_1 = i$, $k_l = j$,
and $(k_p, k_{p+1}) \in E$ for $1 \leq p \leq l-1$. The
\emph{length} of the path, denoted by $\ell(\pi)$, is $l-1$ (the number
of its edges). For $p \geq 1$, $\Gp_{ij}(W)$ denotes the set of paths
from $i$ to $j$ in $\Gcal(W)$ of length $p$ and $\Ginf_{ij}(W)$ the set
of all paths from $i$ to $j$. The
\emph{weight of a path} $\pi = (k_1, \dots, k_l)$, denoted by
$\omega(\pi)$, is the sum
$\omega(\pi) = \sum_{p=1}^{l-1} w_{k_p k_{p+1}}$.

\subsection{Iterated operators}
\label{sec:graph-interpretation-operators}

Recall that for $W\in\matRmax{n}$ and $p\in\N^*$, $W^p$ is the $p$-th
power of $W$ in the $\mul$ sense, and $S_p(W)$ is the matrix defined
in Section~\ref{sec:iterated_operators}, denoted by $S_p$ here for
simplicity. We note respectively $w_{ij}^{(p)}$ and $s^{[p]}_{ij}$
their coefficients. The following result is well known in tropical
algebra and graph theory~\cite{carre71algebra,cuninghame79minimax},
and will help interpret the operators defined earlier. It can be
proved by induction.

\begin{prop}
\label{prop:wp}
Let $W\in \matRmax{n}$ and $p\in\mathbb{N}^*$. Then for
any $1\leq i, j\leq n$,
\begin{equation}
  \label{eq:w_p}
  w_{ij}^{(p)} = \max\left\lbrace \omega(\pi), \ \pi\in\Gp_{ij}(W)\right\rbrace \;\; and \;\; s^{[p]}_{ij} = \max \left\lbrace \omega(\pi), \ \pi\in \bigcup_{1\leq k \leq p}\ \Gamma^{(k)}_{ij}(W)\right\rbrace
\end{equation}
with the convention $\max(\emptyset) = -\infty$.
\end{prop}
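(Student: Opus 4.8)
The plan is to prove the first identity by induction on $p$, with the statement quantified over all pairs $1\leq i,j\leq n$ at each level, and then to deduce the second identity from the first together with the distributivity of $\mul$ over $\vee$.

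For the base case $p=1$, I would match $w_{ij}^{(1)} = w_{ij}$ against the paths of length $1$ from $i$ to $j$. Such a path is exactly the single edge $(i,j)$, which belongs to $E$ precisely when $w_{ij} > -\infty$; hence $\Gamma^{(1)}_{ij}(W)$ is either the singleton $\{(i,j)\}$ of weight $w_{ij}$, or empty when $w_{ij} = -\infty$, and in both cases the convention $\max(\emptyset) = -\infty$ gives $\max\{\omega(\pi), \pi \in \Gamma^{(1)}_{ij}(W)\} = w_{ij}$. For the inductive step I would write $W^{p+1} = W^p \mul W$ and expand the max-plus product as $w_{ij}^{(p+1)} = \bigvee_{k=1}^n w_{ik}^{(p)} + w_{kj}$; applying the induction hypothesis to each $w_{ik}^{(p)}$ rewrites this as $\bigvee_{k=1}^n \big(\max_{\pi' \in \Gamma^{(p)}_{ik}(W)} \omega(\pi')\big) + w_{kj}$. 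The combinatorial heart of the argument is the bijection between paths of length $p+1$ from $i$ to $j$ and pairs $(\pi', (k,j))$ where $\pi'$ is a length-$p$ path from $i$ to some vertex $k$ and $(k,j)\in E$: concatenation produces a length-$(p+1)$ path of weight $\omega(\pi') + w_{kj}$, and every length-$(p+1)$ path decomposes uniquely this way by splitting off its last edge. Taking the maximum over both $k$ and $\pi'$ then identifies $w_{ij}^{(p+1)}$ with $\max\{\omega(\pi), \pi \in \Gamma^{(p+1)}_{ij}(W)\}$, closing the induction.

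The second identity follows quickly once the first is established. By the definition of $S_p$ and the distributivity used in Section~\ref{sec:iterated_operators}, $s^{[p]}_{ij} = \bigvee_{k=1}^p w_{ij}^{(k)} = \bigvee_{k=1}^p \max\{\omega(\pi), \pi \in \Gamma^{(k)}_{ij}(W)\}$; since paths of distinct lengths are distinct, the sets $\Gamma^{(k)}_{ij}(W)$ are pairwise disjoint, and a maximum of maxima over them equals the single maximum over their union $\bigcup_{1\leq k \leq p} \Gamma^{(k)}_{ij}(W)$, which is the claimed formula.

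The step I expect to require the most care is not conceptually deep but bookkeeping-heavy: reconciling the algebraic $-\infty$ conventions with the combinatorics. Concretely, I would check that a missing final edge ($w_{kj} = -\infty$) or an absent length-$p$ subpath ($w_{ik}^{(p)} = -\infty$) contributes $-\infty$ to the supremum and so corresponds exactly to the nonexistence of the relevant concatenated path, so that the bijection is weight-preserving even at the boundary cases. Once the concatenation bijection is stated cleanly, the rest is routine.
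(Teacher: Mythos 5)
Your proof is correct and follows exactly the route the paper indicates (it states only that the result "can be proved by induction" and cites it as well known): induction on $p$ via the last-edge decomposition of length-$(p+1)$ paths, then the $S_p$ formula by taking the maximum over the union of the $\Gamma^{(k)}_{ij}(W)$. Your attention to the $-\infty$ boundary cases is appropriate and the argument is complete.
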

The equations in \eqref{eq:w_p} are equivalent to saying that
\begin{enumerate}
\item $w_{ij}^{(p)} > -\infty$ (resp. $s^{[p]}_{ij} > -\infty$) if and
  only if there is at least a path in $\Gcal(W)$ from vertex $i$ to
  vertex $j$ of length exactly (resp. at most) $p$;
\item $w_{ij}^{(p)}$ (resp. $s^{[p]}_{ij}$) is the maximal weight over
  the set of paths from vertex $i$ to vertex $j$ of length exactly
  (resp. at most) $p$.
\end{enumerate}
Therefore the graphs $\Gcal(W^p)$ and $\Gcal(S_p)$ have the same set of
vertices as the original graph $\Gcal(W)$, but an edge exists between
vertices $i$ and $j$ in $\Gcal(W^p)$ (resp. $\Gcal(S_p)$) whenever
there is a path of length exactly (resp. at most) $p$ from $i$ to $j$
in $\Gcal(W)$. The weights associated with this new edge are the
maximal weights over the corresponding set of paths.

Now if $W\in\Doa{n}$, following Section~\ref{sec:iterated_operators}
we get, for $\mbf{x}\in\LatL$ and $i\in\setBracket{1}{n}$:
\begin{equation}
  \label{eq:delta_p}
  \begin{array}{lcr}
   \dl^p_{W} (\mbf{x})_i = \bigvee_{j\in\mathcal{N}^p_i} \lbrace x_j + w^{(p)}_{ij}\rbrace &, \;\;\; & \ep^p_{W} (\mbf{x})_i = \bigwedge_{j\in\check{\mathcal{N}}^p_i} \lbrace x_j - w^{(p)}_{ji}\rbrace
  \end{array}
\end{equation}
and
\begin{equation}
  \label{eq:sup_delta_k}
  \begin{array}{lcr}
    D^{[p]}_W (\mbf{x})_i= \bigvee_{j\in N^p_i} \lbrace x_j + s^{[p]}_{ij}\rbrace   &,\;\;\;& E^{[p]}_W (\mbf{x})_i= \bigwedge_{j\in\check{N}^k_i} \lbrace x_j - s^{[p]}_{ji}\rbrace
  \end{array}
\end{equation}
where $\mathcal{N}^p_i$ is the set of neighbours of vertex $i$ in
$\Gcal(W^p)$ or, equivalently, the set of vertices in $\Gcal(W)$ that
can be reached from $i$ through a path of length $p$;
$\check{\mathcal{N}}^p_i = \Big\lbrace j\in\lbrace 1,\dots, n\rbrace,
i\in \mathcal{N}^p_j\Big\rbrace$;
$N^p_i = \cup_{1\leq k \leq p}\mathcal{N}^k_i$ and
$\check{N}^p_i = \cup_{1\leq k \leq p}\check{\mathcal{N}}^k_i$. Hence
these dilations and erosions are suprema and infima of ``penalised''
values over extended neighbourhoods induced by the original graph. The
penalization is given by the strength of the connection between
vertices: the closer the penalising weight to zero, the more the
neighbours' value contributes to the result. The fact that we can
restrict the supremum and infimum over graph neighbourhoods
in~\eqref{eq:delta_p} and \eqref{eq:sup_delta_k} is due to the weight
values being $-\infty$ outside these neighbourhoods, hence not
contributing to the supremum and infimum.

\subsection{Path interpretation of the opening $G^{[p]}_{W}$}
\label{sec:path_interpretation}

The goal of this section is to show that $G^{[p]}_{W}$ can be
interpreted similarly to a path opening~\cite{heijmans2005path}, in
the sense that it preserves bright values that are connected to other
bright values forming long enough paths in a graph. We can first
remark that for any $\mbf{x}\in\LatL$, $i\in\setBracket{1}{n}$ and
$t\in [a,b]$:
\begin{equation}
  \label{eq:caract_opening_general}
  G^{[p]}_{W}(\mbf{x})_i\geq t \iff \exists j\in N^p_i, \; \text{ such that } \forall l\in
  \check{N}^p_j\;\; x_l \geq t - s^{[p]}_{ij} + s^{[p]}_{lj},
\end{equation}
which is straightforward from the expressions
in~\eqref{eq:sup_delta_k}, as $G^{[p]}_{W} =
D^{[p]}_{W}E^{[p]}_{W}$. This directly yields
\begin{equation}
  \label{eq:opening_general_k}
  G^{[p]}_{W}(\mbf{x})_i = \bigvee\left\lbrace t\in [a,b], \exists j\in N^p_i, \; \forall l\in \check{N}^p_j,\;\; x_l \geq t - s^{[p]}_{ij} + s^{[p]}_{lj}\right\rbrace.
\end{equation}
In the case of binary weights, \emph{i.e.}  $w_{ij} = 0$ if vertex $j$
is neighbour of $i$ in $\Gcal$ and $w_{ij} = -\infty$ otherwise, which
corresponds to a non-weighted graph, then
$s^{[p]}_{ij} = s^{[p]}_{lj} = 0$ in \eqref{eq:caract_opening_general}
and \eqref{eq:opening_general_k}. Therefore, if
$G^{[p]}_{W}(\mbf{x})_i\geq t$, then there is a vertex $j$ which is at
most $p$ steps away from $i$, such that all paths of length at most
$p$ and ending in $j$, including those of length exactly $p$ and
passing through $i$ (if they exist), show values larger than
$t$. In the general case, the additional term
$- s^{[p]}_{ij} + s^{[p]}_{lj}$ modulates this constraint in function of the strength of the connection of $i$ and the other vertices of $\check{N}_j$,   to $j$.

\section{Links to the max-plus spectral theory}
\label{sec:morpho_spectral}

Now we present the consequences and interpretations of some results
from the spectral theory in max-plus algebra. We first report
definitions from~\cite{cuninghame79minimax} necessary to
Theorem~\ref{thm:eigen-problem} (also
from~\cite{cuninghame79minimax}). Then we draw the links to our
setting and more particularly in the case of a symmetric matrix,
corresponding to a non-directed graph. In all this section,
$W\in\matRmax{n}$.

\subsection{General definitions and results}
\label{sec:general_def_results}

\begin{defn}[Eigenvector, eigenvalue \cite{cuninghame79minimax}]
  Let $\mbf{x}\in\Rmax^n$ and $\lambda\in\Rmax$. Then $\mbf{x}$ is an
  \emph{eigenvector} of $W$ with $\lambda$ as corresponding
  $eigenvalue$ if
  $W\mul \mbf{x} = \lambda \mul \mbf{x} = \lambda + \mbf{x}$.
  If there exists finite $\mbf{x}$ and $\lambda$ solutions to this
  equation, we say that the eigenproblem is \emph{finitely soluble}.
\end{defn}

In the graph $\Gcal(W)$, a path $(k_1, \dots, k_l)$ is called a
\emph{circuit} if $k_1 = k_l$. We will note $\Ccal(W)$ the set of all
circuits of $\Gcal(W)$. Circuits allow us to distinguish another class of
matrices in $\matRmax{n}$, called \emph{definite} matrices. They are
important to the present framework as they include the {\doa} matrices.

\begin{defn}[Definite matrix~\cite{cuninghame79minimax}]
  $W$ is said \emph{definite} if $\max_{c\in\Ccal(W)} \omega(c) = 0.$
  In other words, all the circuits of $\Gcal(W)$ have non positive
  weights, and at least one circuit $c^*$, called a zero-weight
  circuit, achieves $\omega(c^*)=0$.
\end{defn}
To see that if $W$ is row or column-$0$-astic, then it is definite, it is sufficient to build an increasing path with zero-weight, until one vertex repeats.
The path can be initialized with any vertex $j_1$.
Then given the current path $(j_1, \dots, j_m)$, we extend it by adding a vertex $j_{m+1}$ such that $w_{j_m j_{m+1}} = 0$. This is always possible thanks to the row or column-0-asticity of $W$.
Since there are $n$ distinct vertices in $\Gcal(W)$, an index will repeat after at most $n$ iterations.

\begin{defn}[Eigen-node, equivalent eigen-nodes~\cite{cuninghame79minimax}]
  Let $W$ be a definite matrix. An \emph{eigen-node} is any vertex in
  $\Gcal(W)$ belonging to a zero-weight circuit. Two eigen-nodes are
  said \emph{equivalent} if there is a zero-weight circuit passing
  through both of them.
\label{def:eigen-node}
\end{defn}
In~\cite{cuninghame79minimax}, $S_n(W) = \bigvee_{1\leq k\leq n}W^k$
is denoted by $\Delta(W)$ and called the $\textbf{metric matrix}$. Recall
that for $i,j\in\setBracket{1}{n}$, $\Delta(W)_{ij}$ is the maximal
weight over the set of paths from vertex $i$ to vertex $j$ of length
at most $n$, in $\Gcal(W)$ (Prop.~\ref{prop:wp}). If $W$ is definite,
circuits have non-positive weights in $\Gcal(W)$ and therefore any
path longer than $n$ can be reduced to a shorter path with non larger
weight. Hence, $\Delta(W)_{ij}$ is actually the maximal weight over
the set of \emph{all} paths from $i$ to $j$. This provides an easy characterisation of eigen-nodes for $W$ definite: $j$ is an
eigen-node of $\Gcal(W)$ if and only if $\Delta(W)_{jj} = 0$. Furthermore, the $j$-th column $\xi_j$ of $\Delta(W)$ is a map
of the ancestors of $j$ in $\Gcal(W)$. It tells which vertices can
reach $j$ and at which cost.
\begin{defn}[Fundamental eigenvectors,
  eigenspace~\cite{cuninghame79minimax}]
  Let $W$ be a definite matrix. Then a fundamental eigenvector of $W$
  is any $j$-th column $\xi_j$ of $\Delta(W)$, where $j$ is an
  eigen-node. Two fundamental eigenvectors are said \emph{equivalent}
  if their associated eigen-nodes are equivalent
  (see Definition~\ref{def:eigen-node}).\\
  Let
  $\mathcal{E} = \lbrace\xi_{i_1}, \xi_{i_2}, \dots, \xi_{i_k}\rbrace$
  be a set of $k\geq 1$ fundamental eigenvectors of $W$, all pairwise
  non-equivalent. The set $\mathcal{E}$ is said to be a maximal set of
  non-equivalent fundamental eigenvectors if any other fundamental
  eigenvector of $W$ is equivalent to one of the eigenvectors in
  $\mathcal{E}$.\\
  In this case the set
  $\lbrace \bigvee_{j=1}^k x_j + \xi_{i_j}, \mbf{x}\in\Rmax^k\rbrace$
  is called the eigenspace of $W$ and does not depend on $\mathcal{E}$
  (see \cite{cuninghame79minimax}, Lemma 24-1).
\end{defn}
\begin{thm}[\cite{cuninghame79minimax}]
  Let $W$ be a {\doa} (hence definite) matrix. Then the following statements are valid: 
  \begin{itemize}
  \item For any fundamental eigenvector $\xi_j$ of $W$ (finite or
    not), $W\mul\xi_j = \xi_j$.
  \item The eigenproblem is finitely soluble.
   \item If two fundamental eigenvectors are equivalent, then they are equal.
  \item Any finite eigenvector is associated to the eigenvalue
    $\lambda = 0$, and lies in the eigenspace of $W$.
  \end{itemize}
  \label{thm:eigen-problem}
\end{thm}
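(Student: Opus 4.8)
The plan is to work throughout with the metric matrix $\Delta(W) = S_n(W)$ and its path interpretation (Proposition~\ref{prop:wp}): since $W$ is definite, $\Delta(W)_{ij}$ is the maximal weight of \emph{any} path from $i$ to $j$ in $\Gcal(W)$, and $j$ is an eigen-node precisely when $\Delta(W)_{jj}=0$. The only other facts I need are that every entry of $W$ satisfies $w_{ij}\le 0$ (because $W$ is \doa), and the super-additivity $\Delta(W)_{ik}\ge \Delta(W)_{ij}+\Delta(W)_{jk}$, which holds because concatenating two maximal-weight paths produces an admissible path from $i$ to $k$.

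For the first bullet, fix an eigen-node $j$ and recall $(\xi_j)_i=\Delta(W)_{ij}$. Then $(W\mul\xi_j)_i=\bigvee_{k}\big(w_{ik}+\Delta(W)_{kj}\big)$ is the maximal weight obtained by prepending an edge $(i,k)$ to a maximal-weight path from $k$ to $j$. For $i\neq j$ this supremum ranges over all paths from $i$ to $j$ (and equals $-\infty$ when none exists), hence it equals $\Delta(W)_{ij}$; for $i=j$ it ranges over all circuits through $j$, whose maximal weight is $\Delta(W)_{jj}=0$ since $j$ is an eigen-node. In either case $(W\mul\xi_j)_i=(\xi_j)_i$, so $W\mul\xi_j=\xi_j$ whether or not $\xi_j$ is finite. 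The second bullet is then immediate from row-$0$-asticity: $W\mul\mbf{0}=\big(\bigvee_{j}w_{ij}\big)_i=\mbf{0}=0\mul\mbf{0}$, so the finite vector $\mbf{0}$ solves the eigenproblem with eigenvalue $0$.

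For the third bullet, suppose the eigen-nodes $j$ and $j'$ lie on a common zero-weight circuit. The two arcs of that circuit realise the value $0$, so $\Delta(W)_{jj'}+\Delta(W)_{j'j}\ge 0$; and this sum is $\le \Delta(W)_{jj}=0$ because concatenation yields a circuit through $j$. Hence $\Delta(W)_{jj'}+\Delta(W)_{j'j}=0$, and since both entries are $\le 0$ they must each vanish. Feeding $\Delta(W)_{jj'}=\Delta(W)_{j'j}=0$ into super-additivity gives $\Delta(W)_{ij'}\ge\Delta(W)_{ij}$ and $\Delta(W)_{ij}\ge\Delta(W)_{ij'}$ for every $i$, so $\xi_j=\xi_{j'}$.

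The fourth bullet splits into the eigenvalue and the membership claims. That $\lambda=0$ follows from two one-sided bounds for a finite $\mbf{x}$ with $W\mul\mbf{x}=\lambda+\mbf{x}$: evaluating at an index where $x_i$ is maximal and using $w_{ij}\le 0$ gives $\lambda\le 0$, while summing $\lambda+x_{k_p}\ge w_{k_pk_{p+1}}+x_{k_{p+1}}$ around a zero-weight circuit (the $x$-terms telescope and the weights add to $0$) gives $\lambda\ge 0$. For membership, $W\mul\mbf{x}=\mbf{x}$ forces $\Delta(W)\mul\mbf{x}=\mbf{x}$, that is $x_i=\bigvee_{j}\big(\Delta(W)_{ij}+x_j\big)$; I claim this supremum is attained at an eigen-node. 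Starting from any maximiser $j_0$ and repeatedly replacing it by a maximiser of $\bigvee_{j}\big(\Delta(W)_{j_0 j}+x_j\big)$, super-additivity keeps each successive index a maximiser for $i$; finiteness forces a vertex to repeat, and definiteness together with the telescoping of the $x$-values forces the repeated segment to have total weight $0$, so the repeating vertex sits on a zero-weight circuit and is an eigen-node. Therefore $\mbf{x}=\bigvee_{e\text{ eigen-node}}\big(x_e+\xi_e\big)$, and by the third bullet we may retain one representative per equivalence class, placing $\mbf{x}$ in the eigenspace. I expect this last step — showing the dominating term can always be taken at an eigen-node — to be the main obstacle; the remaining claims are direct consequences of the path interpretation of $\Delta(W)$ and the non-positivity of its entries.
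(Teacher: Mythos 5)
The paper does not prove this theorem at all: it is imported verbatim from Cuninghame-Green's \emph{Minimax Algebra} (the cited reference), so there is no in-paper argument to compare yours against. Judged on its own, your proof is correct and self-contained, and it leans on exactly the two facts the paper does make available, namely the path interpretation of $\Delta(W)$ for definite matrices and the characterisation of eigen-nodes by $\Delta(W)_{jj}=0$. Your treatment of the second bullet is in fact simpler than the classical one: row-$0$-asticity gives $W\mul\mbf{0}=\mbf{0}$ immediately, whereas the general theorem for definite matrices has to assemble a finite eigenvector from fundamental ones. The arguments for bullets three and four (super-additivity of $\Delta(W)$, the two one-sided bounds on $\lambda$, and the ``walk along maximisers until a vertex repeats'' argument to land on an eigen-node) are all sound; the last of these is indeed the only non-routine step and you carry it out correctly. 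One small point deserves more care in the first bullet: for $i\neq j$ the supremum $\bigvee_k\bigl(w_{ik}+\Delta(W)_{kj}\bigr)$ only recovers the single-edge path $(i,j)$ through the term $k=j$, i.e.\ via $w_{ij}+\Delta(W)_{jj}=w_{ij}$, so the eigen-node hypothesis $\Delta(W)_{jj}=0$ is needed there too, not only in the $i=j$ case; without it one gets merely $W\mul\xi_j\leq\xi_j$. You invoke that hypothesis in the same paragraph, so this is an imprecision of exposition rather than a gap.
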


\subsection{Consequences and interpretations}
\label{sec:consequences_spectral_theory}

\paragraph{In general.}
As said, the results of the previous section apply to our setting
since we consider adjunctions represented by {\doa} matricesa. For
$W\in\Doa{n}$, $\Delta(W)$ is also in $\Doa{n}$ and the corresponding
opening $\dl_{\Delta(W)}\ep_{\Delta(W)}$ is $G^{[n]}_W$. By
definition, $G^{[n]}_W(\mbf{x})$ projects $\mbf{x}\in\LatL$ onto
$\dl_{\Delta(W)}(\LatL)$, which is the set
$\lbrace \bigvee_{j=1}^n y_j + \xi_{j}, \mbf{y}\in\LatL \rbrace$ of
max-plus combinations of columns of
$\Delta(W)$. Theorem~\ref{thm:eigen-problem} tells that this
decomposition can be split as
$G^{[n]}_W(\mbf{x}) = \mbf{u}\vee\mbf{v}$, where $\mbf{u}$ lies in the
eigenspace of $W$ and $\mbf{v}$ is a max-plus combination of the
$\xi_j$s which are not fundamental eigenvectors. This decomposition
may be sparser than the original one, as the dimension of the
eigenspace of $W$, \emph{i.e.} $Card(\mathcal{E}$), can be lower than
the number of fundamental eigenvectors.

\paragraph{The case of symmetric $W\in\Doa{n}$.}
This case corresponds to considering a
non-directed graph supporting the signal $\mbf{x}$. As the adjacency
relationship is often based on a symmetrical function on pairs of
vertex values, this assumption covers many practical cases
(e.g.~\cite{blusseau18tropical,blusseau2022anisotropic}).
The main consequence of $W\in\Doa{n}$ symmetric is that \emph{every
  vertex} $j$ is an eigen-node: for any $j\in\setBracket{1}{n}$ there
is~$i$ such that $w_{ij} = 0 = w_{ji}$ and therefore $(j,i,j)$ is a zero-weight circuit. This entails three other consequences. 

First,
$\Delta(W)_{jj} = 0$ for every $j\in\setBracket{1}{n}$, following the
characterisation of eigen-nodes described earlier, which implies that
$\dl_{\Delta(W)} = D^{[n]}_W$ is extensive and
$\ep_{\Delta(W)} = E^{[n]}_W$ anti-extensive
(Prop.~\ref{prop:adjunction-doa}). Secondly, $W\mul\xi_j = \xi_j$ for
every column $\xi_j$ of $\Delta(W)$, which implies
$W^k\mul\xi_j = \xi_j$ for $1\leq k\leq n$, hence
$\Delta(W)\mul\xi_j = \xi_j$ and finally
$\Delta(W)\mul\Delta(W) = \Delta(W)$. This means $D^{[n]}_W$ and
$E^{[n]}_W$ are idempotent. They are therefore a closing and an
opening respectively and $E^{[n]}_W = G^{[n]}_W$, since an
adjunction $(\ep, \ \dl)$ for which $\ep$ is an opening and $\dl$ a
closing verifies $\ep = \dl\ep$ (and $\dl = \ep\dl$). The third
consequence is the following.
\begin{corol}
If $W\in\Doa{n}$ is symmetric, then the set of invariants of $G^{[n]}_W$ is exactly the eigenspace of $W$.
\end{corol}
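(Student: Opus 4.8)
The plan is to build on the facts established in the preceding paragraphs: for symmetric $W\in\Doa{n}$ the operator $G^{[n]}_W$ is an opening that factorises as $G^{[n]}_W = \dl_{\Delta(W)}\ep_{\Delta(W)}$, the metric matrix $\Delta(W)$ is idempotent for $\mul$, i.e. $\Delta(W)\mul\Delta(W)=\Delta(W)$, and every vertex is an eigen-node, so that all columns $\xi_1,\dots,\xi_n$ of $\Delta(W)$ are fundamental eigenvectors of $W$.

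First I would identify the invariance domain of the opening with the image of its dilation part. For any adjunction $(\ep,\dl)$ the opening $\gamma=\dl\ep$ is idempotent, so its set of invariants coincides with its image $\gamma(\LatL)$; and the adjunction identity $\dl\ep\dl=\dl$ gives $\gamma(\LatL)=\dl(\LatL)$. Applied to $G^{[n]}_W$, this shows that its invariants are exactly $\dl_{\Delta(W)}(\LatL)=\lbrace\Delta(W)\mul\mbf{y},\ \mbf{y}\in\LatL\rbrace=\lbrace\bigvee_{j=1}^n y_j+\xi_j,\ \mbf{y}\in\LatL\rbrace$, namely the max-plus combinations of the columns of $\Delta(W)$ with coefficients constrained to $[a,b]$.

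Then I would match this set with the eigenspace. Since every vertex is an eigen-node, the eigenspace of $W$ is $\lbrace\bigvee_{j=1}^n x_j+\xi_j,\ \mbf{x}\in\Rmax^n\rbrace$; here using all $n$ columns rather than a maximal non-equivalent family is harmless, because equivalent fundamental eigenvectors are equal (Theorem~\ref{thm:eigen-problem}), so duplicated columns contribute nothing new to the combinations. The two inclusions then follow from the idempotency of $\Delta(W)$. If $\mbf{z}$ is an invariant, then $\mbf{z}=\Delta(W)\mul\mbf{y}$ for some $\mbf{y}\in\LatL\subseteq\Rmax^n$, so $\mbf{z}$ is a max-plus combination of the $\xi_j$ and therefore lies in the eigenspace. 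Conversely, if $\mbf{z}=\Delta(W)\mul\mbf{x}$ is an eigenspace vector that lies in $\LatL$, then $\Delta(W)\mul\mbf{z}=\Delta(W)\mul\Delta(W)\mul\mbf{x}=\Delta(W)\mul\mbf{x}=\mbf{z}$, so choosing $\mbf{y}=\mbf{z}\in\LatL$ exhibits $\mbf{z}$ as a fixed point of $\dl_{\Delta(W)}$, hence as an invariant of $G^{[n]}_W$.

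The step requiring the most care is reconciling the coefficient sets: the eigenspace is built with unbounded coefficients in $\Rmax$, whereas the invariants are sup-combinations with coefficients in the bounded interval $[a,b]$. The idempotency of $\Delta(W)$ is exactly what bridges this gap, since it forces every eigenspace vector to be a fixed point of $\dl_{\Delta(W)}$, so that an eigenspace element already belonging to $\LatL$ can always be rewritten with coefficients $y_j=z_j\in[a,b]$. Consequently the invariants of $G^{[n]}_W$ are precisely the eigenspace vectors contained in $\LatL$, which is the content of the corollary.
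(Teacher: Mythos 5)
Your proof is correct and follows the route the paper intends (the corollary is stated without an explicit proof, as an immediate consequence of the preceding observations): the invariants of the opening are the image of $\dl_{\Delta(W)}$, and symmetry makes every column of $\Delta(W)$ a fundamental eigenvector, so that image coincides with the max-plus span of the fundamental eigenvectors. You are also right to flag, and correctly resolve via the idempotency $\Delta(W)\mul\Delta(W)=\Delta(W)$, the point the paper leaves implicit: the eigenspace is defined with coefficients ranging over $\Rmax^k$ while the invariance domain sits inside $\LatL$, so the statement must be read as \emph{the eigenspace intersected with} $\LatL$.
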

When $W$ is symmetric, a maximal set of $k$ non-equivalent fundamental eigenvectors $\lbrace\xi_{i_1}, \xi_{i_2}, \dots, \xi_{i_k}\rbrace$, $k\leq n$, can be seen as negative distance maps to the $k$ corresponding eigen-nodes $\Gcal(W)$, as they contain the optimal cost (maximal weight) between any vertex and the eigen-nodes\footnote{Note that $\Delta(W)$ is a metric, not exactly between vertices, but between their equivalence classes induced by Def.~\ref{def:eigen-node}, as all vertices are eigen-nodes when $W$ is symmetric.}. Hence we can picture the aspect of $G^{[n]}_W(\mbf{x})$, for $\mbf{x}\in\LatL$: it is the upper-envelope of the largest vertical translations of these distance maps that are dominated by $\mbf{x}$. Therefore, adapting $\Gcal(W)$ to $\mbf{x}$ by well connecting vertices within relevant structures preserves these structures under the filter $G^{[n]}_W$, as shown in~\cite{blusseau18tropical,blusseau2022anisotropic}. In practice, $n$ might be large, such as the number of pixels of an image. Since $(G^{[p]}_W )_{1\leq p\leq n}$ is a granulometry, we know that $G^{[n]}_W$ can be approximated by $G^{[p]}_W$ with increasing $p$.

\section{Conclusion}
\label{sec:conclusion}
In this paper we consolidated the basis of the representation of
adjunctions by matrices in max-plus algebra. We showed that it is a very
flexible framework that generalises many types of morphological
adjunctions. In particular, it allows describing precisely the
behaviour of iterated operators based on spatially-variant, non-flat
structuring functions. This is made possible by their graph
interpretation and spectral results in max-plus algebra. Future works
shall investigate further the insights that max-plus algebra can bring
to mathematical morphology through this framework.

\bibliographystyle{splncs04}
\bibliography{max_plus}

\end{document}